\newtheorem{remark}{Remark}
\newtheorem{theorem}{Theorem}
\newtheorem{proposition}{Proposition}
\newcommand{\proj}[1]{\ensuremath{\ketbra{#1}{#1}}}
\newcommand{\Id}{{\rm 1\hspace{-0.9mm}l}}
\newcommand{\LL}{\mathcal{L}}
\newcommand{\RR}{\mathcal{R}}
\newcommand{\A}{\mathcal{A}}
\newcommand{\B}{\mathcal{B}}
\newcommand{\PP}{\mathcal{P}}
\newcommand{\QQ}{\mathcal{Q}}
\newcommand{\TT}{\mathcal{T}}
\renewcommand{\SS}{\mathcal{S}}
\newcommand{\XX}{\mathcal{X}}
\newcommand{\YY}{\mathcal{Y}}
\newcommand{\ZZ}{\mathcal{Z}}
\newcommand{\ketv}[1]{\ensuremath{\left|\left.{#1} \right\rangle \! 
\right\rangle}}
\newcommand{\brav}[1]{\ensuremath{\left\langle \! \left\langle {#1} 
\right.\right|}}
\newcommand{\projv}[1]{\ketv{{#1}} \!\! \brav{{#1}}}
\begin{document}
	
\title{Storage and retrieval of von Neumann measurements via indefinite 
causal order structures}
\author{Paulina Lewandowska}
\affiliation{IT4Innovations, VSB~-~Technical University of Ostrava, 17.~listopadu 2172/15, 708 33 Ostrava, Czech Republic}
\author{Ryszard Kukulski$^*$}
\affiliation{Faculty of Physics, Astronomy and Applied Computer Science,\\
ul. Łojasiewicza 11, Jagiellonian University, 30-348 Kraków, Poland}
\email{ryszard.kukulski@uj.edu.pl}
\begin{abstract}
	This work presents the problem of learning an unknown  von 
		Neumann measurement of dimension $d$ using indefinite causal 
		structures. In the considered scenario, we have access to $N$ copies 
		of the measurement. We use formalism of process 
		matrices to store information about the 
		given measurement, that later will be used to reproduce its best possible 
		approximation. Our goal is to compute the maximum 
	value of the average fidelity function $F_d(N)$ of our procedure. We 
	prove that $F_d(N) = 1 - \Theta \left( \frac{1}{N^2}\right)$  for 
	arbitrary but fixed 
	dimension $d$. Furthermore, we present the SDP program for computing 
	$F_d(N)$. Basing on the numerical investigation, 
	we show that for the qubit von Neumann measurements using indefinite 
	causal 
	learning structures provide better approximation than quantum 
	networks, starting from $N \ge 3$.
\end{abstract}
\maketitle
\section{Introduction}
In the  setup of storage and retrieval (SAR) of quantum measurements,  we would like to
to approximate a given, unknown measurement, which we were able to perform 
$N$ times experimentally.
Such strategy usually consists of the notion of a quantum networks known 
also as quantum combs \cite{bisio2010optimal}. 
Then, the scheme is created by preparing some initial quantum state, 
applying the unknown operation $N$ times, which stores information about the 
unknown operation for later use, and finally, a retrieval operation that
produces an approximation of the black box on some arbitrary quantum state. 
The scheme is optimal when it achieves the highest possible fidelity of the 
approximation~\cite{lewandowska2022storage, raginsky2001fidelity, 
belavkin2005operational}.

The seminal work in this field was the paper~\cite{bisio2011quantum}. The Authors have shown
that, in general, the optimal algorithm for quantum measurement learning cannot 
be parallel. In \cite{lewandowska2022storage}, whereas,  the Authors have discovered the asymptotic behaviour of the maximum value of the average fidelity function over all 
possible learning schemes is given by $1 - 
\Theta\left(\frac{1}{N^2}\right)$.

In this work,  we introduce a new aspect of von Neumann measurement learning
by using indefinite causal structure theory.
The topic of indefinite causal structures has recently gained traction in 
quantum information research. This more general model of computation has the 
potential to outperform algorithms based on quantum networks in specific 
tasks, such as learning or discriminating between two quantum channels 
\cite{bavaresco2021strict, quintino2022deterministic, bavaresco2022unitary}. 
In the problem of von Neumann measurements learning, indefinite causal 
structures find a place in the storage part of the procedure. Their 
mathematical description is formalized in the language of process 
matrices~\cite{lewandowska2023strategies, araujo2015witnessing}.

As for the results of our work, we will prove that for $2 \rightarrow 1$ 
learning scheme, using indefinite causal structures does not 
improve the average fidelity function $F_d(2)$ for any dimension $d$. 
Next, however, we will show the numerical advantage of using causal 
structure theory in the $N \rightarrow 1$ learning scheme for $N \ge 3$. Finally, 
we determine the asymptotic behavior of $F_d(N)$ for $ N \rightarrow \infty$. 

This paper is organized as follows. In Section~\ref{sec:formulation} we 
formulate $N \rightarrow 1$ learning scheme of  von
Neumann measurement and 
 necessary mathematical tools needed to describe this problem.
Section~\ref{sec:cases} presents a theoretical results of this paper. In  
Subsection~\ref{sec:2-to-1}, we show that the indefinite causal structures do not 
improve the average fidelity function of learning for two 
copies of von Neumann measurement, whereas  Subsection~\ref{sec:sdp} 
presents the semidefinite programs for computing the maximum value of the 
fidelity function for a finite number of copies. In 
Subsection~\ref{sec:n-to-1},  we also prove the asymptotic behavior of the 
fidelity function. 
Section~\ref{sec:numerical} shows a numerical advantage of using the indefinite causal structure of von Neumann measurements learning over any strategies based on quantum combs. Finally, the concluding remarks are presented in  Section~\ref{sec:conclusion}. 

\section{Problem formulation}\label{sec:formulation}

This section presents the formulation of  learning scheme of an 
unknown von Neumann
measurement via indefinite causal structures and necessary mathematical tools 
needed to describe this problem. 

\subsection{Mathematical framework}\label{sec:notation}

Let us introduce the following notation. Consider two complex Euclidean spaces and
denote them by $\XX, \YY$. By $\mathrm{L}(\XX, \YY)$ we denote the collection of all linear mappings of the form $M: \XX \rightarrow \YY$. As a shorthand we put $\mathrm{L}(\XX) \coloneqq 
\mathrm{L}(\XX, \XX)$. By $\mathrm{Herm}(\XX)$ we
denote the set of Hermitian operators while the subset of $\mathrm{Herm(\XX)}$ consisting of positive
semidefinite operators will be denoted by $\mathrm{Pos}(\XX)$. The
set of quantum states defined on space $\XX$, that is the set of positive 
semidefinite operators having unit trace,
will be denoted by $\Omega(\XX)$. We will also need a linear mapping 
transforming
$\mathrm{L}(\XX) $ into $\mathrm{L}(\YY)$ as $\mathcal{T}: 
\mathrm{L}(\XX)
\mapsto \mathrm{L}(\YY).$ There exists a bijection between introduced 
linear mappings
$\mathcal{T}$ and set of matrices $\mathrm{L}(\YY \otimes \XX)$, known as the 
Choi-Jamio{\l}kowski
isomorphism~\cite{choi1975completely, jamiolkowski1972linear}. Its explicit 
form is ${T} =
\sum_{i,j} \TT(\ketbra{i}{j}) \otimes \ketbra{i}{j} $. We will denote 
linear mappings with calligraphic font $\LL, \SS, \mathcal{T}$ etc., whereas 
the 
corresponding Choi-Jamio{\l}kowski matrices as plain symbols: $L, S, T$ etc. 

A general quantum measurement (POVM) $\mathcal{Q}$ can be viewed as a set of 
positive semidefinite
operators $ \QQ = \{ Q_i \}_i$ such that $\sum_i Q_i = \Id$. These operators 
are usually called 
effects. The von Neumman measurements, $\PP_U$,
are a special subclass of measurements whose all effects are rank-one 
projections given by $\PP_U =
\{P_{U,i}\}_{i}^{} = \{U\ketbra{i}{i}U^\dagger\}_{i}^{}$ for some 
unitary matrix $U \in \mathrm{L}(\XX)$.
%
 The
Choi matrix of $\PP_U$ is $ P_U = \sum_{i} \ketbra{i}{i} \otimes 
\overline{P_{U,i}}, $ which will be
utilized throughout this work.

 Let us consider a composition of mappings
 $\RR = \mathcal{N} \circ \mathcal{M}$, where $\mathcal{N}: \mathrm{L}(\ZZ ) \rightarrow \mathrm{L}(\YY)$ and $\mathcal{M}: \mathrm{L}(\XX) \rightarrow \mathrm{L}(\ZZ)$ with Choi matrices $N \in \mathrm{L}(\ZZ \otimes \YY)$ and $M \in \mathrm{L}(\XX \otimes \ZZ)$, respectively. Then, the Choi matrix of $\RR$ is given
 by \cite{chiribella2009theoretical} $
 R = \tr_{\ZZ} \left[ \left(\Id_\YY \otimes M^{T_\ZZ} \right) \left( N \otimes \Id_\XX \right) \right], 
 $ 
 where $M^{T_\ZZ}$ denotes the partial transposition of $M$ on the subspace $\ZZ$. The above result
 can be expressed by introducing the notation of the link product of the operators $N$ and
 $M$ as \begin{equation}
 N * M \coloneqq \tr_{\ZZ} \left[\left(\Id_\YY \otimes M^{T_\ZZ} \right) \left( N \otimes \Id_\XX \right) \right].
 \end{equation}

 Finally, we define the operator  $\prescript{}{\XX}{Y}$ as 
$
\prescript{}{\XX}{Y}  = \frac{\Id_\XX}{\dim(\XX)} \otimes \tr_\XX Y,
$
 for every $Y \in \mathrm{L}(\XX \otimes \ZZ)$, where $\ZZ$ is an arbitrary complex Euclidean space and the projective operator \begin{equation}\label{proj-n}
L_V(W) =  \prescript{}{\left[1- \prod_{i} \left(1 - \A_O^i + \A_I^i \A_O^i\right) + \prod_{i} \A_I^i \A_O^i  \right]}{W}.
\end{equation}
	We say that $W \in \mathrm{Pos}(\A_I^1\otimes \A_O^1 \otimes  \ldots 
	\otimes \A_I^N \otimes \A_O^N) $ is  $N$-partite process matrix if it 
	fulfills the following conditions~\cite{araujo2015witnessing}
	\begin{equation}
	 W  = L_V(W) \,\,\, \text{and}\,\,\tr(W) = \dim(\A_O^1) \cdot \ldots \cdot \dim(\A_O^N),
	\end{equation}
	where the projection operator $L_V$ 
	is defined by Eq.~\eqref{proj-n}.

\subsection{Learning setup}\label{sec:setup}
Imagine we are given a black box with the promise that it contains some von 
Neumann measurement,
$\PP_U$, which is parameterized by a unitary matrix $U$. The exact value of $U$ is unknown to us. We
are allowed to use the black box $N$ times. Our goal is to prepare a storage strategy $\SS $ and 
a retrieval measurement $\RR$
such that we are able to approximate $\PP_U$ on an arbitrary state $\rho \in \Omega(\XX_{in})$.  This 
approximation will
be denoted throughout this work as $\QQ_U$. We would like to point out that, 
generally, $\QQ_U$ will
\emph{not} be a von Neumann measurement. The learning scheme will be denoted by $\LL$ with Choi matrix $L$ being a concatenation $L = R * S$. Additionally, we assume that the Choi matrix of the storage  $S$ has an indefinite causal order. More precisely, 
the storage $S$ 
is described by  $N-$partite process matrix $W$, that is $W = \tr_{\XX_a} S$.
We provide an overview of the learning scheme in 
Fig.~\ref{fig:schema}.

\begin{figure}[!ht]
	\centering
	\includegraphics[scale=0.5]{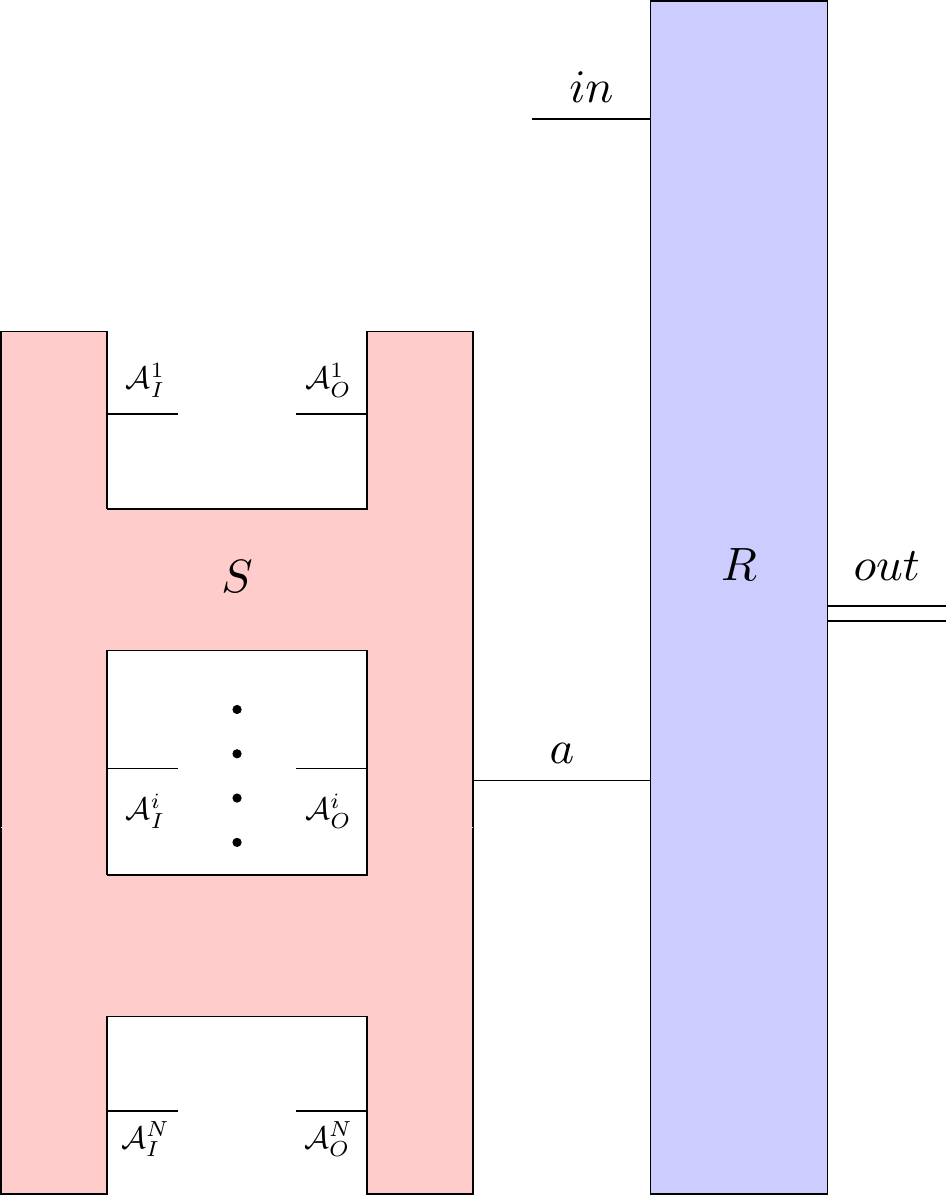}
	\caption{A schematic representation of the setup for  $N \rightarrow 1$ 
	learning scheme of von Neumann
		measurements~$\PP_U$ with the usage of indefinite causal order 
		structures.}\label{fig:schema}
\end{figure} 

As a measure of quality of approximating a von Neumann measurement $\PP_U=\{ 
P_{U,i}\}_i$ with a POVM $\QQ_U = \{ Q_{U,i}
\}_i$ we choose the fidelity function~\cite{raginsky2001fidelity}, which is 
defined as follows
\begin{equation}\label{fidelity}
	\mathcal{F}(\PP_U, \QQ_U) \coloneqq \frac{1}{d} \sum_i
	\tr(P_{U, i} Q_{U, i}),
\end{equation}
where $d$ is the dimension of the measured system. 
Note that in the case when $\PP_{U}$ is a von Neumann measurement we obtain the 
value of fidelity 
function $\mathcal{F}$ belongs to the interval $[0,1]$ and equals to one if and 
only if $P_{U,i} = Q_{U,i}$ 
for all $i$. As there is no prior information about $\PP_U$
provided, we assume that $U$ is sampled from a distribution pertaining to the 
Haar measure.
Therefore, considering a von Neumann measurement $\PP_U$ and its approximation 
$\QQ_U$ we introduce
the \emph{average fidelity function}~\cite{bisio2016quantum} with respect to 
Haar measure as
\begin{equation}\label{average-fidelity}
	\mathcal{F}^{\text{	avg}}_{d} \coloneqq \int_U dU \mathcal{F}(\PP_U, \QQ_U).
\end{equation}

Our main goal is to maximize $\mathcal{F}_\text{avg}$ over all possible 
learning schemes $\LL$.  We introduce the 
notation of the
maximum value of the average fidelity function
\begin{equation}\label{eq:fidelity}
	F_d(N) \coloneqq \max_\LL \mathcal{F}_\text{avg}.
\end{equation}

\section{$N \rightarrow 1 $ learning scheme of von Neumann measurements 
}\label{sec:cases}

This section presents the various results for  $N \rightarrow 1 $ learning scheme 
of von Neumann measurements. 
The solution for one copy of von Neumann measurements was provided in 
\cite{bisio2016quantum}. The Authors then have proved that $F_d(1) = 
\frac{d+1}{d^2}$. 
But what if we have access to more copies of von Neumann measurements? 

\subsection{ $2\rightarrow 1 $ learning scheme}\label{sec:2-to-1}
Let us consider a  learning scheme  $\mathcal{L}$ in which we learn a von 
Neumann measurement $\PP_U$ by using two copies of it.  
Then, its Choi  operator $L \in \mathrm{L}(\A_I \otimes \A_O \otimes \B_I \otimes 
\B_O \otimes \XX_{in} \otimes \XX_{out})$ satisfies the condition
\begin{equation}
\tr_{\XX_{out}} L = \Id_{{\XX_{in}}} \otimes W, 
\end{equation}
where 
$W$ is a bipartite process matrix $(N=2)$. The following proposition says 
that no matter which approach we use, either parallel or indefinite causal 
learning strategies,  the maximum value of the average fidelity for $2 
\rightarrow 1$ learning scheme of von Neumann measurements is the same. The 
solution for the parallel case was provided in \cite{bisio2016quantum}.

\begin{proposition}\label{causal-learning-scheme-2-1}
	The usage of indefinite causal order structure does not improve the maximum 
	value of the average fidelity 
	function in    
	$2 \rightarrow 1$ learning scheme of von Neumann measurements $\PP_{U}$ of 
	dimension $d$.
\end{proposition}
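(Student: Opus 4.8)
The plan is to cast $F_d(2)$ as a single semidefinite program that is linear in the Choi matrix $L$, and then to compare the optimum of this program over the process-matrix-constrained feasible set with its optimum over the smaller comb-constrained set, whose value is already known from~\cite{bisio2016quantum}. First I would write the approximating POVM produced by the scheme as a link product, $Q_U = L * (P_U \otimes P_U)$, plugging the two uses of the black box into the storage ports $(\A_I,\A_O)$ and $(\B_I,\B_O)$. Substituting this into the fidelity~\eqref{fidelity} and carrying out the Haar integral over $U$ by Weingarten calculus turns the average fidelity into a linear functional
\begin{equation}
\mathcal{F}_\text{avg} = \tr(L\,\Omega), \qquad \Omega = \int dU\; \omega(U),
\end{equation}
where the performance operator $\Omega$ is a fixed positive operator built from three copies of $P_U$ (the two stored uses and the target that is contracted with the output). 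Two symmetries of $\Omega$ drive the argument: invariance under swapping the two storage copies $\A \leftrightarrow \B$, since the two uses enter symmetrically, and covariance under the invariance group of a von Neumann measurement (diagonal unitaries composed with permutations, $U \mapsto U D \Pi$), which leaves each $P_{U,i}$ fixed up to relabeling.

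Next I would set up both optimizations in parallel. Each maximizes $\tr(L\,\Omega)$ subject to $L \in \mathrm{Pos}(\cdot)$ and $\tr_{\XX_{out}} L = \Id_{\XX_{in}} \otimes W$; they differ only in the admissible storage $W$, namely arbitrary bipartite process matrices ($W = L_V(W)$) versus quantum combs with a fixed causal order, or convex combinations thereof. Since every comb is a bipartite process matrix, the comb feasible set is contained in the process feasible set, which gives immediately
\begin{equation}
F_d^{\mathrm{comb}}(2) \;\leq\; F_d^{\mathrm{proc}}(2).
\end{equation}
Exploiting the covariance of $\Omega$, I would twirl any feasible $L$ by the measurement's invariance group together with the $\A \leftrightarrow \B$ swap; this leaves the objective unchanged and maps each feasible set into itself, so both optima are attained on symmetric, covariant $L$.

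The reverse inequality is the heart of the proposition and the step I expect to be the main obstacle. Here I would use the $N=2$ decomposition of the process-matrix cone into its causally separable part and the genuinely indefinite (OCB-type) directions, and argue that the swap-symmetrized $\Omega$ is supported on the causally separable subspace, so the indefinite component of $W$ is annihilated in $\tr(L\,\Omega)$; equivalently, I would lift the comb-optimal dual solution of~\cite{bisio2016quantum} to an explicit dual-feasible point of the process-matrix SDP with value $F_d(2)$. Either route then yields, by weak duality, $F_d^{\mathrm{proc}}(2) \leq F_d(2) = F_d^{\mathrm{comb}}(2)$, which combined with the trivial inequality above forces equality. The delicate point—and the genuine content of the proof—is verifying that the extra linear constraints relaxed in passing from combs to process matrices are precisely those on which the symmetrized $\Omega$ vanishes; this is exactly the property that is expected to fail for $N \geq 3$ and that underlies the numerical advantage reported later.
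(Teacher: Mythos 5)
Your setup---expressing the average fidelity as a linear functional $\tr(L\,\Omega)$, observing that the comb feasible set sits inside the process-matrix feasible set (so $F_d^{\mathrm{comb}}(2)\le F_d^{\mathrm{proc}}(2)$ is immediate), and symmetrizing $L$ by twirling---is sound and matches the spirit of the paper's opening moves, which invoke the covariance relation $[L_{ijk},\,U_\A\otimes U_\B\otimes U^\dagger]=0$ and the outcome-relabeling symmetry $L_{ijk}=L_{\sigma(i)\sigma(j)\sigma(k)}$ (Lemmas 9.3 and 9.4 of \cite{bisio2016quantum}). But the step you yourself identify as the heart of the proposition is genuinely missing, and neither of the two routes you sketch for it works as stated. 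The decomposition route fails at the level of definitions: the causally separable process matrices form a \emph{convex subset} of the process-matrix cone, not a linear subspace, so there is no ``causally separable subspace'' on which a symmetrized $\Omega$ could be supported, and no well-defined linear ``indefinite component'' of $W$ that $\tr(L\,\Omega)$ could annihilate. The dual-lifting route is plausible in principle, but you give no construction of the dual-feasible point, so the reverse inequality $F_d^{\mathrm{proc}}(2)\le F_d^{\mathrm{comb}}(2)$ remains an assertion.

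What actually closes the gap in the paper is a \emph{primal} structural argument that your plan does not contain. Because the measurement outcomes are classical, $L$ is diagonal in the registers $\A_O$, $\B_O$, $\XX_{out}$, with blocks $L_{ijk}$; after twirling, the relabeling symmetry forces the induced storage blocks to satisfy $W_{jk}=W_{\sigma(j)\sigma(k)}$ for every permutation $\sigma$, whence $W=\Id_{\B_O}\otimes\Id_{\A_O}\otimes P+J(\Delta)\otimes(Q-P)$ with $P=W_{12}$ and $Q=W_{11}$. The decisive observation is that the affine validity conditions defining a bipartite process matrix---specifically $W+\prescript{}{\A_O\B_O}{W}=\prescript{}{\A_O}{W}+\prescript{}{\B_O}{W}$---force $Q=P$. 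Thus after symmetrization \emph{every} admissible indefinite-causal storage collapses to $W=\Id_{\B_O}\otimes\Id_{\A_O}\otimes P$, i.e.\ a parallel storage, and the optimum coincides with the parallel value computed in \cite{bisio2016quantum}. Note this conclusion is stronger than the one you aim for: it is not that the indefinite part of $W$ is orthogonal to the performance operator, but that at $N=2$ the process-matrix constraints leave no indefinite (indeed, not even adaptive) freedom once the classicality and symmetry reductions are imposed---the only symmetry-allowed deviation, $J(\Delta)\otimes(Q-P)$, is killed outright by the constraint above. Accordingly, your closing intuition that the advantage at $N\ge 3$ arises because the relaxed constraints are ``exactly those on which $\Omega$ vanishes'' at $N=2$ does not reflect the actual mechanism; to repair your proof you would need to identify this structural collapse of the feasible set rather than a property of the objective.
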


\begin{proof}
	Due to the fact that the quantum network $\mathcal{L}$ has classical 
	labels on spaces $\A_O, \B_O$ and $\XX_{out}$ then its Choi matrix $L \in  \mathrm{Herm}(\A_I \otimes \A_O \otimes \B_I \otimes \B_O \otimes \XX_{in} \otimes \XX_{out})$ has the following form
$
	L = \sum_i \proj{i}_{\XX_{out}} \otimes L_i, 
$
	where $
	L_i = \sum_{j,k} \proj{j}_{\B_O} \otimes \proj{k}_{\A_O} \otimes L_{ijk}. 
$
	Simultaneously,  we know that
$
	\sum_i 	L_i = \sum_{i, j,k} \proj{j}_{\B_O} \otimes \proj{k}_{\A_O} \otimes L_{ijk} = \Id_{\XX_{in}} \otimes W.
$
	Without loss of generality we can assume that $L_{ijk} $ satisfies the commutative relation (see Lemma 9.3 in \cite{bisio2016quantum})
	\begin{equation}
	\left[L_{ijk}, U_\mathcal{A} \otimes U_{\mathcal{B}} \otimes U^\dagger \right] = 0,
	\end{equation} where $U_{\mathcal{A}} \in \mathrm{L}(\A_I \otimes \A_O)$, $U_{\mathcal{B}} \in \mathrm{L}(\B_I \otimes \B_O)$ and $U^\dagger \in \mathrm{L}\left({\XX_{in} \otimes \XX_{out}}\right)$. 
	So, we have
$
	\sum_i L_{ijk} = \Id_{\XX_{in}}\otimes  \bra{j}_{\B_O} \bra{k}_{\A_O} W \ket{j}_{\B_O} \ket{k}_{\A_O}.
$ for all $j,k$.  
	From relabeling symmetry property  (see Lemma 9.4 in \cite{bisio2016quantum}) given by $   L_{ijk} = L_{\sigma(i) \sigma(j) \sigma(k)} $, we have 
	\begin{equation}
	\begin{split}
		\sum_i L_{ijk} & = \sum_i L_{\sigma(i) \sigma(j) \sigma(k)}   \\& = \Id_{\XX_{in}} \otimes \bra{\sigma(j)} \bra{\sigma(k)} W \ket{\sigma(j)} \ket{\sigma(k)},
	\end{split}
	\end{equation}
	for any permutation $\sigma$. 
	Therefore, we have
	\begin{equation}\label{eq:w}
	\begin{split}
	W &  = \sum_{j,k} \proj{j}_{\B_O} \otimes \proj{k}_{\A_O} \otimes \frac{1}{d} \sum_i \tr_{\XX_{in}} L_{ijk} \\& =
	\sum_{j,k}  \proj{j}_{\B_O} \otimes \proj{k}_{\A_O} \otimes  W_{jk}. 
	\end{split}
	\end{equation} 
	Hence,
$
	\forall j,k,\sigma \,\,\ W_{jk} = W_{\sigma(j) \sigma(k)}.
$
	It implies that
	$
	W_{11} = \ldots = W_{dd} $ and $ 
	W_{12} = W_{ab} $ for all $  a \neq b$. These properties together with Eq.~\eqref{eq:w} imply that
$
	W = \Id_{\B_O} \otimes \Id_{\A_O} \otimes P + J(\Delta) \otimes \left( Q - P \right), 
$
	where $ P = W_{12}$ and $Q = W_{11}$ and $J(\Delta)$ denotes the Choi matrix of of the completely
	dephasing channel $\Delta$. 
	From the definition of the process matrix (more precisely from the condition $W 
	+ \prescript{}{\A_O\B_O}{W} 
	= \prescript{}{\A_O}{W}  + \prescript{}{\B_O}{W}$) we obtain that $P = Q$, which completes the proof. 
\end{proof}

\subsection{Semidefinite program  for calculating the maximum value of the average 
fidelity }\label{sec:sdp}

In the general approach, to compute the maximum value of the average fidelity 
$F_d(N)$ we use the
semidefinite programming (SDP). We will present the original primal problem 
(Program~\ref{sdp-1}) for computing $F_d(N)$ for $N \rightarrow 1 $  learning 
scheme 
of von Neumann measurement $\PP_{U}$ of dimension $d$. Next, we will describe a 
simplified version of the primal problem presented in Program~\ref{sdp-2}.

To optimize this problem, we used the \texttt{Julia}
programming language along with quantum package \texttt{QuantumInformation.jl} \cite{Gawron2018} and
SDP optimization via SCS solver \cite{ocpb:16, scs} with absolute convergence tolerance $10^{-5}$. The
code is available on GitHub \cite{code22}. 

\begin{table}[htp!]
	\begin{center}
		\centering\underline{Original problem}
		\begin{equation*}
		\begin{split}
		\text{maximize:}\quad &
		\int_U dU \frac{1}{d} \sum_{i=1}^d
		\tr\left[L_i^\top \left(P_{U, i} \otimes P_U^{\otimes N}\right)\right]
		\\[2mm]
		\text{subject to:}\quad & 
		L \in \mathrm{Pos}\left(\bigotimes_{i=1}^N \A_{I,O}^i \otimes 
		\XX_{in} \otimes \XX_{out}\right), \\ &
		L = \sum_{i=1}^d \proj{i}_{\XX_{out}} \otimes L_i, \\ &
		\prescript{}{\XX_{out}}{L} = \prescript{}{\XX_{out}, \XX_{in}}{L}, 
		\\ &
		\prescript{}{\XX_{out}}{L} = \prescript{}{[\Id - \Pi_i(\Id - \A_O^i + 
			\A_{I,O}^i) + \Pi_i 
			\A_{I,O}^i]}{\left[\prescript{}{\XX_{out}}{L}\right]},\\ &
		\tr(L) = d^{N+1}.
		\end{split}
		\end{equation*}
		\caption{Semidefinite program for maximizing the value of the  average 
		fidelity function $F$ for $N \rightarrow 1 $  learning scheme of von 
		Neumann measurement $\PP_{U}$ of dimension~$d$. }
		\label{sdp-1}
	\end{center}
\end{table} 

Here, we present
a simplified description of the primal problem associated with Program \ref{sdp-1}. 
Let $\YY = \sum_{j\in\XX} 
\left(\sum_{i=1}^d \sum_{j \not\in 
	\XX} \tr_{\XX_{in}, \Pi_{\not\XX}\A_I^i}L_{i,j}\right) 
\otimes 
\proj{j}_{\XX} $ such that $\XX \neq \emptyset$.  Then, we have:

\begin{table}[htp!]
	\begin{center}
		\centering\underline{Simplified problem}
		\begin{equation*}
		\begin{split}
		\text{maximize:}\quad &
		\frac{1}{d} \sum_{i=1}^d \sum_{j_1,\ldots,j_N=1}^d
		\tr\left[L_{i,j} \proj{i} \otimes \proj{j}\right]
		\\[2mm]
		\text{subject to:}\quad & 
		L_{i,j} \in 
		\mathrm{Pos}\left(\XX_{in} \otimes \A_{I} 
		\right), \\ &
		\sum_i L_{i,j} = \prescript{}{\XX_{in}}{\left[\sum_i L_{i,j} \right] }
		\quad \forall_j,
		\\ &
		[L_{i,j}, \bar U \otimes U^{\otimes N}] = 0, \\& 
		\prescript{}{[\Pi_{\XX}(\Id - \A_O^i)]}{\left[ \YY  \right]} = 0, \\&
		\sum_{i,j} \tr(L_{i,j}) = d^{N+1}.\\ &
		\end{split}
		\end{equation*}
		\caption{A simplified description of the SDP Program \ref{sdp-1}.}
		\label{sdp-2}
	\end{center}
\end{table}   

\begin{remark}
	We would like to point out that the commutation relation $[L_{i,j}, \bar U 
	\otimes U^{\otimes N}] = 0$ can be equivalently exchanged with $L_{i,j} = 
	\bigoplus_{\mu \in \mathrm{irrepS(U 
			\otimes U^{\otimes N})}} \Id_{d_\mu} \otimes Q_{i,j, \mu}$, where the 
	summation goes over the irreducible representation of 
	$L_{i,j}$~\cite{bisio2016quantum}.
	
\end{remark}

\subsection{ $N\rightarrow 1 $ learning scheme}\label{sec:n-to-1}
In this section, we analyze the asymptotic behavior of
$F_d(N)$ for $ N \rightarrow \infty$.  Our main result can be summarized as
the following theorem.

\begin{theorem}
	Let $F_d(N)$ be the maximum value of the average fidelity function, defined in 
	Eq.~\eqref{eq:fidelity} for $N \rightarrow 1 $  learning scheme of von Neumann 
	measurements. Then, for
	arbitrary but fixed dimension $d$ we obtain 
	\begin{equation}
	F_d(N) = 1 - \Theta \left(\frac{1}{N^2} \right). 
	\end{equation}
\end{theorem}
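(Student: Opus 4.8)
The plan is to prove the upper and lower bounds on $F_d(N)$ separately, establishing both $F_d(N) \le 1 - \Omega(1/N^2)$ and $F_d(N) \ge 1 - O(1/N^2)$. For the lower bound, the natural strategy is constructive: I would exhibit an explicit learning scheme (even a purely parallel one based on quantum combs would suffice for the lower bound) whose average fidelity is at least $1 - O(1/N^2)$. Since the paper cites \cite{lewandowska2022storage} establishing exactly the $1 - \Theta(1/N^2)$ behavior for the comb-based setting, and since indefinite causal structures form a strictly larger class of admissible schemes than parallel combs, the lower bound $F_d(N) \ge 1 - O(1/N^2)$ follows immediately: any comb-achievable fidelity is achievable here, so the optimal causal value can only be larger. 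This gives the easy direction essentially for free.

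The upper bound is the substantive part. The key reduction is to exploit the symmetry structure already set up in the SDP formulation (Program~\ref{sdp-2} and the accompanying Remark). The commutation relation $[L_{i,j}, \bar U \otimes U^{\otimes N}] = 0$ means the relevant operators decompose over the isotypic components of the tensor representation $U \otimes U^{\otimes N}$, equivalently $\bar U \otimes U^{\otimes N}$. I would use Schur--Weyl duality to index these irreps: the fidelity functional, after integrating over the Haar measure, reduces to a weighted sum over irreducible components $\mu$ appearing in the decomposition, with weights controlled by the dimensions $d_\mu$ and multiplicities. The maximization then becomes a finite combinatorial optimization over how much ``weight'' the scheme places on each irrep, subject to the process-matrix normalization $\sum_{i,j}\tr(L_{i,j}) = d^{N+1}$ and the positivity and causality constraints.

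The main obstacle, and the heart of the proof, will be extracting the $\Theta(1/N^2)$ gap from this representation-theoretic sum. Concretely, I expect the optimal fidelity to take the form of a ratio involving sums of irrep dimensions, and the deviation $1 - F_d(N)$ should be governed by the spectral gap between the largest eigenvalue of a certain positive operator and the subleading contributions. The quadratic decay $1/N^2$ typically arises because the leading correction is a second-order term: the first-order contribution vanishes by symmetry (the optimal approximation is ``centered'' correctly), leaving a curvature term scaling as $1/N^2$. I would make this precise by identifying the dominant irrep $\mu^*$ of $U^{\otimes N}$ (the one with the largest Young diagram dimension for fixed $d$) and performing an asymptotic expansion of the ratio $F_d(N)$ in powers of $1/N$, tracking the leading nonvanishing term. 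The technically delicate point is controlling the multiplicity spaces and verifying that the causality constraint $\prescript{}{[\Pi_{\XX}(\Id - \A_O^i)]}{[\YY]} = 0$ does not force a strictly worse scaling than the comb case; since Proposition~\ref{causal-learning-scheme-2-1} already shows no improvement for $N=2$, I would expect the asymptotic rate to match the comb result of \cite{lewandowska2022storage} exactly, so the upper bound $1 - \Omega(1/N^2)$ should follow from essentially the same representation-theoretic estimates, with the causal constraints only enlarging the feasible set without changing the leading asymptotics.
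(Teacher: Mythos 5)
Your lower bound coincides with the paper's: causal strategies contain the comb strategies of \cite{lewandowska2022storage}, so $F_d(N)\ge 1-O(1/N^2)$ comes for free, and the whole burden is the upper bound. There, however, your proposal has a genuine gap: the one step you defer --- verifying that the process-matrix constraints ``do not force a strictly worse scaling'' and ``only enlarge the feasible set without changing the leading asymptotics'' --- is the entire content of the theorem. Enlarging the feasible set can only increase the maximum, so the comb upper bound of \cite{lewandowska2022storage} does not transfer; you would have to redo all the representation-theoretic estimates under the process-matrix validity constraints, and your sketch gives no mechanism for that. The appeal to Proposition~\ref{causal-learning-scheme-2-1} (the $N=2$ case) is heuristic evidence, not an argument, and the Schur--Weyl program you outline (dominant irrep $\mu^*$, vanishing first-order term, curvature at order $1/N^2$) is stated as an expectation rather than derived; in particular, you never say how the constraint $\prescript{}{[\Pi_{\XX}(\Id - \A_O^i)]}{\left[\YY\right]} = 0$ enters the optimization over irrep weights, which is precisely where an indefinite-causal-order strategy could in principle differ from a comb.

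The paper avoids all of this with a short reduction that your proposal is missing. Write $L = R * S$ and purify the storage, $S = \projv{X}$ with $X \ge 0$, so that $W = \tr_{\XX_a} S = X^2$. The covariance constraint $[W, \Id_{\A_O} \otimes U^{\otimes N}] = 0$ then lifts to $X$ itself, and inserting the $N$ uses of the measurement into the storage produces a memory state of the fixed covariant form $\left(\Id_{\A_O} \otimes \overline{U}^{\otimes N}\right) \rho \left(\Id_{\A_O} \otimes {U^{\top}}^{\otimes N}\right)$ with $\rho$ a single state independent of $U$. Such a state is exactly what a \emph{parallel} scheme can prepare from $N$ copies of the unitary channel $\Phi_{\overline{U}}$, so every causal strategy is dominated by the known problem of learning $\PP_U$ from $N$ parallel copies of $\Phi_{\overline{U}}$, and \cite[Lemma 7]{lewandowska2022storage} then yields $F_d(N) \le 1 - \Theta(1/N^2)$ directly --- no asymptotic expansion over irreps is needed. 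Your direct expansion program might be completable, but as written it assumes its conclusion at the decisive step; to repair it you should either supply an analogue of this purification-plus-covariance reduction or actually carry out the irrep-by-irrep optimization under the process-matrix constraints rather than asserting its outcome.
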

\begin{proof}
We will follow the approach from~\cite[Lemma 3]{lewandowska2022storage}. Based on 
the results from~\cite{lewandowska2022storage} it is enough to show $F_d(N) \le 1 - 
\Theta \left(\frac{1}{N^2} \right).$ 

A learning network $L$ can be described as a concatenation of a storage $S$ 
and a retrieval $R$, that is $L = R * S$. What is more, we can assume that 
storage is given as a purification, so the Choi-Jamio{\l}kowski isomorphism 
of $S$ is pure, $S = \projv{X} \in \A_{I, O} \otimes \XX_a$ and $X \ge 
0$~\cite[Lemma 3]{lewandowska2022storage}, 
see 
Fig.~\ref{fig:schema}. Then, $W = \tr_{\XX_{a}}(S) = X^2.$ From the SDP program, 
we have $[W, \Id_{\A_O} \otimes U^{\otimes N}] =0$ for each unitary matrix 
$U$. Therefore, $[X, \Id_{\A_O} \otimes U^{\otimes N}] =0$ and the memory 
state that keeps the information of $P_U$ has the 
form
\begin{equation}
\begin{split}
&\tr_{\A_{I, O}} \left(S \left(\Id_{{\XX_{a}}} \otimes \left(\Id_{\A_O} \otimes 
U^{\otimes N} J(\Delta) \Id_{\A_O} \otimes {U^\dagger}^{\otimes 
N}\right)\right)\right) = \\
& \left(\Id_{\A_O} \otimes \overline{U}^{\otimes N}\right) \rho \left(\Id_{\A_O} 
\otimes 
{U^\top}^{\otimes N}\right),
\end{split}
\end{equation}
where $\rho$ is some state. That means, we can upper bound the value of the 
fidelity within the new scheme, where we are given in parallel 
$N$ copies of unitary channel $\Phi_{\overline{U}}$ and we try to learn 
$\PP_{U}$. According, to \cite[Lemma 7]{lewandowska2022storage} we get 
$F_d(N) \le 1 - \Theta \left(\frac{1}{N^2} \right).$
\end{proof}

\section{Numerical results for qubit von Neumann measurements}\label{sec:numerical}

Although the maximum value of the average fidelity function behaves asymptotically the same using quantum combs or indefinite causal order, we will show here a numerical advantage of SAR for qubit von Neumann measurements  ($d=2$). To show that, we compare
the results for learning scheme for $N \ge 3$ with the parallel and 
adaptive learning schemes introduced in \cite{lewandowska2022storage}. 

In the qubit case, we make two simplifications of SDP Program~\ref{sdp-2}. 
First, the relation $\forall_U \,\, [L_{i,j}, \bar U \otimes U^{\otimes N}] 
= 0$ is equivalent with $\forall_U \,\, [L_{i,j}, U^{\otimes N+1}] = 0$. 
Second, as $L = \sum_{i,j} \proj{i}_{\XX_{out}} \otimes \proj{j}_{\A_O} 
\otimes L_{i,j}$, then $W = \sum_{j} \proj{j}_{\A_O} \otimes \frac12 
\tr_{\XX_{in}}\left( \sum_i  L_{i,j} \right)$ is a $N$-partite 
block-diagonal process matrix. Describing $W$ in the Pauli 
basis~\cite{bavaresco2021strict}
$\left\{\Id_2, \sigma_X, \sigma_Y, \sigma_Z \right\}^{\otimes 2N}$, we get 
that $W$ belongs to the subspace spanned by $\left\{\Id_2, \sigma_Z 
\right\}^{\otimes 
N} \bigotimes \left\{\Id_2, \sigma_X, \sigma_Y, \sigma_Z \right\}^{\otimes 
N}$ and within this subspace it is orthogonal to $$\mathrm{R} = 
\{\sigma_Z^{k_1} \otimes 
\cdots \otimes\sigma_Z^{k_N} \otimes M_1^{k_1} \otimes \cdots \otimes 
M_N^{k_N}:$$ $$ 0 \neq k = (k_1,\ldots,k_N) \in \{0,1\}^{ N}, M_n^k \in \{\Id_2, 
\sigma_X, 
\sigma_Y, \sigma_Z\} \}.$$
Verifying if $W$ is orthogonal to $\mathrm{R}$ is easier than verifying  
$\prescript{}{[\Pi_{\XX}(\Id - \A_O^i)]}{\left[ \YY  \right]} = 0$. It can 
be done simply by checking if
$$\sum_{j} (-1)^{k \cdot j} 
\tr_{\XX_{in}, \Pi_{l: k_l = 0}\A_{I}^l} \left( \sum_i  L_{i,j} \right) = 0,
$$
for each $0 \neq k \in \{0,1\}^{ N}$.

Below we present numerical results for $d=2$ and $N=1,\ldots,5$ that compares 
different storing strategies ($\mathcal{L}_{\text{Causal}}$ - indefinite causal 
order learning strategies, $\mathcal{L}_{\text{Adaptive}}$ - adaptive learning 
strategies, $\mathcal{L}_{\text{Parallel}}$ - parallel learning strategies) and 
show the advantage of using indefinite causal 
order strategies.

\begin{equation*}
\begin{array}{l|c|c|c|c|c}
N&1&2&3&4&5\\\hline
\mathcal{F}_2^{\text{avg}}(\mathcal{L}_{\text{Causal}})&
0.7500 &
0.8114&
0.8698 &
0.8981 & 0.9204 
\\\hline
\mathcal{F}_2^{\text{avg}}(\mathcal{L}_{\text{Adaptive}})&
0.7499&
0.8114&
0.8684&
0.8968&
0.9189\\\hline
\mathcal{F}_2^{\text{avg}}(\mathcal{L}_{\text{Parallel}})&
0.7499&
0.8114&
0.8676&
0.8955&
0.9187\\\hline
\end{array}
\end{equation*} 

The above results are presented also in Figure~\ref{fig:learning-causal-advantage}.  

\begin{figure}[htp!]
	\centering	\includegraphics[scale=0.55]{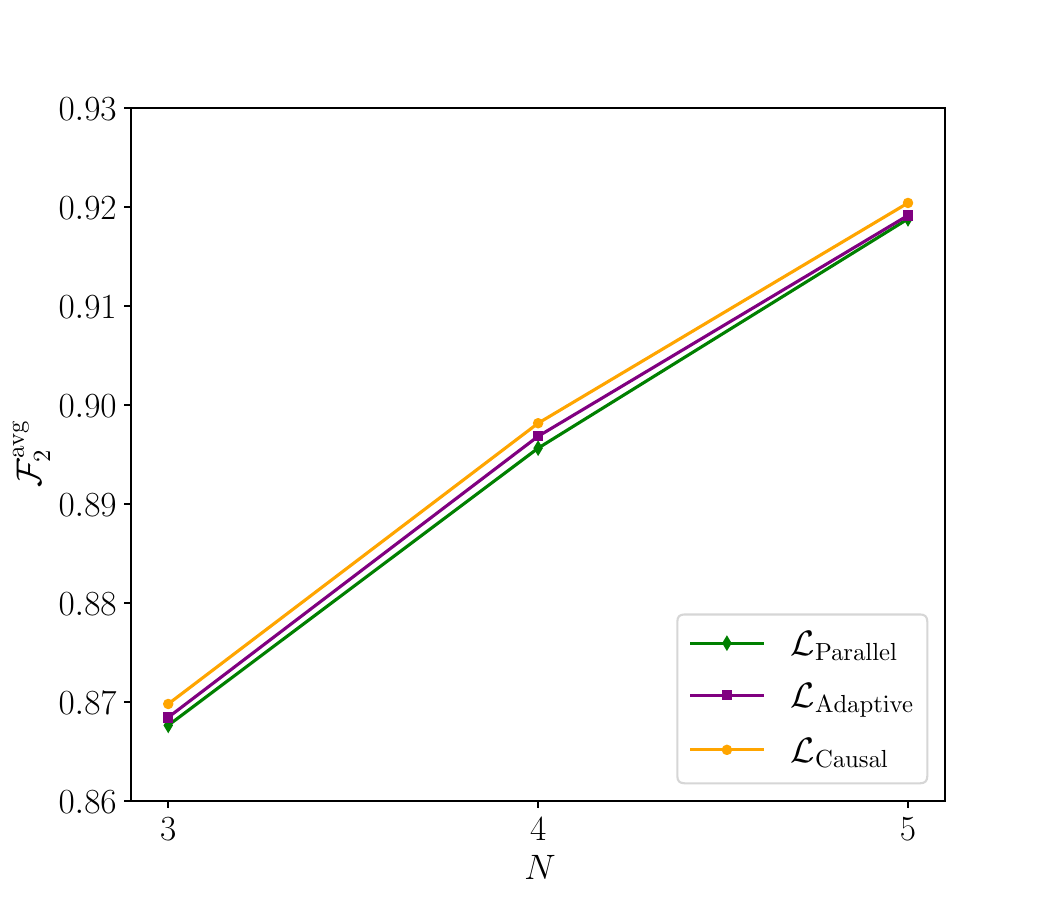}
	\caption{The average fidelity function $\mathcal{F}_2^{\text{avg}}$ 
		 for $N \rightarrow 1$ learning scheme of qubit von Neumann measurements, 
		 where $N = 3,4,5$ -- optimal indefinite causal order learning 
		strategy $\mathcal{L}_{\text{Causal}}$ (orange circles); optimal adaptive 
		learning strategy $\mathcal{L}_{\text{Adaptive}}$ (purple squares); 
		optimal parallel learning scheme $\mathcal{L}_{\text{Parallel}}$ (green  
		triangles). 
	}\label{fig:learning-causal-advantage}
\end{figure}

\section{Conclusions and discussion}\label{sec:conclusion}

In this work,  we studied the problem of learning an unknown von Neumann 
measurement from a finite number of copies $N$ using indefinite causal order 
structures. To do so, we have introduced a notion of $N$-partite process matrices. 
Our main goal was to compute the maximum value of the average fidelity function 
$F_d(N)$ of the approximation, having access to $N$ 
copies of the given measurement. We have proved that $F_d(N) = 1 - \Theta \left( 
\frac{1}{N^2}\right)$  for arbitrary but fixed dimension $d$. 
Next, we have considered various learning schemes for different numbers of 
accessible copies of von Neumann measurements. For $N=2$, we proved that using an 
indefinite causal structures do not improve the average fidelity function $F_d(2)$. 
Next, however, we show numerically advantage of using indefinite causal order 
structures for $d=2$ and $N \ge 3$. For this purpose, we have stated a SDP program 
and provided its simplified version.

Our results give additional confirmation of potential benefits of using indefinite 
causal order structures in quantum information theory. Previously, applications 
were observed for example in quantum channel 
discrimination~\cite{bavaresco2021strict}, quantum 
communication~\cite{ebler2018enhanced} or unitary channels 
transformation~\cite{quintino2019probabilistic, quintino2019reversing}. Here, we 
showed the usage of the theory of indefinite causal order in the problem of 
learning of quantum channels, in particular by using storage and retrieval scheme.

\section*{Acknowledgments}

PL is supported by the Ministry of Education, Youth and Sports of the Czech Republic through the e-INFRA CZ (ID:90254),
with the financial support of the European Union under the REFRESH – 
Research Excellence For REgionSustainability and High-tech Industries 
project number CZ.10.03.01/00/22\_003/0000048 via the Operational Programme 
Just Transition.

RK is supported by the National Science Centre, Poland, under the contract 
number 2021/03/Y/ST2/00193 within the QuantERA II Programme that has 
received funding from the European Union’s Horizon 2020 research and 
innovation programme under Grant Agreement No 101017733.

\bibliographystyle{ieeetr}
\bibliography{learning.bib}

\begin{thebibliography}{10}

\bibitem{bisio2010optimal}
A.~Bisio, G.~Chiribella, G.~M. D'Ariano, S.~Facchini, and P.~Perinotti,
  ``Optimal quantum learning of a unitary transformation,'' {\em Physical
  Review A}, vol.~81, no.~3, p.~032324, 2010.

\bibitem{lewandowska2022storage}
P.~Lewandowska, R.~Kukulski, {\L}.~Pawela, and Z.~Pucha{\l}a, ``Storage and
  retrieval of von neumann measurements,'' {\em Physical Review A}, vol.~106,
  no.~5, p.~052423, 2022.

\bibitem{raginsky2001fidelity}
M.~Raginsky, ``A fidelity measure for quantum channels,'' {\em Physics Letters
  A}, vol.~290, no.~1-2, pp.~11--18, 2001.

\bibitem{belavkin2005operational}
V.~P. Belavkin, G.~M. D'Ariano, and M.~Raginsky, ``Operational distance and
  fidelity for quantum channels,'' {\em Journal of Mathematical Physics},
  vol.~46, no.~6, p.~062106, 2005.

\bibitem{bisio2011quantum}
A.~Bisio, G.~M. D'Ariano, P.~Perinotti, and M.~Sedl{\'a}k, ``Quantum learning
  algorithms for quantum measurements,'' {\em Physics Letters A}, vol.~375,
  no.~39, pp.~3425--3434, 2011.

\bibitem{bavaresco2021strict}
J.~Bavaresco, M.~Murao, and M.~T. Quintino, ``Strict hierarchy between
  parallel, sequential, and indefinite-causal-order strategies for channel
  discrimination,'' {\em Physical Review Letters}, vol.~127, no.~20, p.~200504,
  2021.

\bibitem{quintino2022deterministic}
M.~T. Quintino and D.~Ebler, ``Deterministic transformations between unitary
  operations: Exponential advantage with adaptive quantum circuits and the
  power of indefinite causality,'' {\em Quantum}, vol.~6, p.~679, 2022.

\bibitem{bavaresco2022unitary}
J.~Bavaresco, M.~Murao, and M.~T. Quintino, ``Unitary channel discrimination
  beyond group structures: Advantages of sequential and indefinite-causal-order
  strategies,'' {\em Journal of Mathematical Physics}, vol.~63, no.~4,
  p.~042203, 2022.

\bibitem{lewandowska2023strategies}
P.~Lewandowska, {\L}.~Pawela, and Z.~Pucha{\l}a, ``Strategies for single-shot
  discrimination of process matrices,'' {\em Scientific Reports}, vol.~13,
  no.~1, p.~3046, 2023.

\bibitem{araujo2015witnessing}
M.~Ara{\'u}jo, C.~Branciard, F.~Costa, A.~Feix, C.~Giarmatzi, and
  {\v{C}}.~Brukner, ``Witnessing causal nonseparability,'' {\em New Journal of
  Physics}, vol.~17, no.~10, p.~102001, 2015.

\bibitem{choi1975completely}
M.-D. Choi, ``Completely positive linear maps on complex matrices,'' {\em
  Linear Algebra and its Applications}, vol.~10, no.~3, pp.~285--290, 1975.

\bibitem{jamiolkowski1972linear}
A.~Jamio{\l}kowski, ``Linear transformations which preserve trace and positive
  semidefiniteness of operators,'' {\em Reports on Mathematical Physics},
  vol.~3, no.~4, pp.~275--278, 1972.

\bibitem{chiribella2009theoretical}
G.~Chiribella, G.~M. D'Ariano, and P.~Perinotti, ``Theoretical framework for
  quantum networks,'' {\em Physical Review A}, vol.~80, no.~2, p.~022339, 2009.

\bibitem{bisio2016quantum}
A.~Bisio, G.~Chiribella, G.~D’Ariano, and P.~Perinotti, ``Quantum networks:
  general theory and applications,'' {\em Acta Physica Slovaca}, vol.~61,
  no.~3, pp.~273--390, 2011.

\bibitem{Gawron2018}
P.~Gawron, D.~Kurzyk, and {\L}.~Pawela, ``{QuantumInformation}.jl{\textemdash}a
  julia package for numerical computation in quantum information theory,'' {\em
  {PLOS} {ONE}}, vol.~13, p.~e0209358, dec 2018.

\bibitem{ocpb:16}
B.~O'Donoghue, E.~Chu, N.~Parikh, and S.~Boyd, ``Conic optimization via
  operator splitting and homogeneous self-dual embedding,'' {\em Journal of
  Optimization Theory and Applications}, vol.~169, pp.~1042--1068, June 2016.

\bibitem{scs}
B.~O'Donoghue, E.~Chu, N.~Parikh, and S.~Boyd, ``{SCS}: Splitting conic solver,
  version 3.2.1.'' \url{https://github.com/cvxgrp/scs}, Nov. 2021.

\bibitem{code22}
\href{https://github.com/rkukulski/sar-measurements-causal}{\texttt{https://https://github.com/rkukulski/\\sar-measurements-causal}}.
\newblock Permanent link to code/repository, Accessed: 2024-05-13.

\bibitem{ebler2018enhanced}
D.~Ebler, S.~Salek, and G.~Chiribella, ``Enhanced communication with the
  assistance of indefinite causal order,'' {\em Physical review letters},
  vol.~120, no.~12, p.~120502, 2018.

\bibitem{quintino2019probabilistic}
M.~T. Quintino, Q.~Dong, A.~Shimbo, A.~Soeda, and M.~Murao, ``Probabilistic
  exact universal quantum circuits for transforming unitary operations,'' {\em
  Physical Review A}, vol.~100, no.~6, p.~062339, 2019.

\bibitem{quintino2019reversing}
M.~T. Quintino, Q.~Dong, A.~Shimbo, A.~Soeda, and M.~Murao, ``Reversing unknown
  quantum transformations: Universal quantum circuit for inverting general
  unitary operations,'' {\em Physical Review Letters}, vol.~123, no.~21,
  p.~210502, 2019.

\end{thebibliography}

\onecolumngrid
\newpage
%
%
%

\end{document}